\documentclass[10pt, conference, letterpaper]{IEEEtran}

\usepackage{amsmath} 
\usepackage{amsfonts} 
\usepackage{amssymb}
\usepackage{amsthm}
\usepackage[utf8]{inputenc}
\usepackage{graphicx}
\usepackage{subfigure}
\usepackage{caption}
\usepackage{bbm}
\usepackage{csquotes}
\usepackage{enumerate}
\usepackage[numbers,sort&compress]{natbib} 
\usepackage{color}
\usepackage[normalem]{ulem}
\usepackage{numprint}

\newtheorem{theo}{Theorem}
\newtheorem{rem}{Remark}
\newtheorem{lem}{Lemma}

\hyphenation{op-tical net-works semi-conduc-tor}

\IEEEoverridecommandlockouts

\begin{document}


\title{Performance Analysis of Online Social Platforms}
\author{\IEEEauthorblockN{Anastasios Giovanidis\textsuperscript{*}, Bruno Baynat\textsuperscript{*}\thanks{\textsuperscript{*} The two first authors contributed equally.}, and Antoine Vendeville}
\IEEEauthorblockA{Sorbonne University, CNRS-LIP6, Paris, France, Email: \{firstname.lastname\}@lip6.fr}}

\maketitle


\begin{abstract}
We introduce an original mathematical model to analyze the diffusion of posts within a generic online social platform. Each user of such a platform has his own Wall and Newsfeed, as well as his own self-posting and re-posting activity. 
As a main result, using our developed model, we derive in closed form the probabilities that posts originating from a given user are found on the Wall and Newsfeed of any other. These probabilities are the solution of a linear system of equations. Conditions of existence of the solution are provided, and two ways of solving the system are proposed, one using matrix inversion and another using fixed-point iteration. Comparisons with simulations show the accuracy of our model and its robustness with respect to the modeling assumptions. 
Hence, this article introduces a novel measure which allows to rank users by their influence on the social platform, by taking into account not only the social graph structure, but also the platform design, user activity (self- and re-posting), as well as competition among posts.
\end{abstract}

\section{Introduction}

Online Social Platforms (OSPs) play a major role in the way individuals communicate with each other, share news and  get informed. Today such platforms host billions of user profiles. Although OSPs differ from one another, most of them share a common structure, which allows users to post messages on their Wall and read posts of others on a separate Newsfeed. Most OSPs also permit re-posting from Newsfeed to Wall, in order to facilitate information diffusion. With each re-post (or ``share'', or ``re-tweet'') the information becomes visible to a new audience, which may choose to adopt it or not, thus spreading further the post or halting its diffusion. In this way, posts originally generated by some user circulate inside the social network \cite{EC2012}. When the post is gradually adopted by a considerable proportion of the users, we see large cascades of information appear, and we refer to such posts as ``viral'' \cite{Adamic2013}.

Understanding how information spreads through OSPs is very important as it affects the opinion of the population over several subjects of every-day social life. 
Companies want to determine the set of most influential users for better marketing of their products \cite{Kempe03}, and they would like to predict information cascades \cite{LescovecPredict}. Such research is critical also because spreading of influence can have malevolent purposes instead \cite{FakeNewsPaper18}, such as the spread of misinformation (``fake news''). 
To be able to develop defense mechanisms against such social attacks, a concrete mathematical analysis of post diffusion through OSPs is necessary.

Related literature on the topic has mainly focused on models for opinion dynamics given a social graph, but has not yet considered either the OSP structure or user activity. It would be useful to have an analytical model that explains what makes a post become viral within the platforms, how posts of different user origin compete for visibility, and what is the role of user activity. Instead, most available research on understanding cascades and post diffusion in OSPs is data-driven \cite{EC2012}, \cite{Adamic2013},  \cite{LescovecPredict}.

\subsection{Related Literature}

In most relevant research on opinion dynamics, individuals are seen as agents whose relation is described by a social graph. Each agent has a certain opinion and at each step this opinion is updated through interaction with his direct peers. Such models can be grouped into two general categories.\\
\textit{1) Dynamics with Binary opinions:} There are only two possible opinions that agents can take.  A large amount of work descends from the \textit{voter model} \cite{HoLi75}, where opinion dynamics are based on imitation. 
A variation has been studied in \cite{YOASS}, where agents with persisting opinions are included. For further extensions, see also \cite{HayelINFO18}. 
Another group of work is related to epidemic spread. An agent is ``susceptible'' when his opinion is $0$ and becomes ``infected'' when he adopts opinion $1$, through social interaction. In \cite{Kempe03} two such mechanisms for opinion updates have been proposed. 
\\
\textit{2) Dynamics with Continuous opinions:} Several works in the literature have inherited and extended the original model of DeGroot \cite{DeGroot}. In this, each agent updates his continuous opinion by forming at each step a weighted linear combination of the current opinions of his peers. Variations of this model consider the inclusion of persistent agents \cite{Emily18}. In \cite{Silva17} this update mechanism is used to formulate and solve an opinion manipulation problem. 
To account for more realistic social behavior, the authors in \cite{BaccINFO15} consider opinion dynamics where agents interact in pairs when their opinions are already close.


\textit{Data, OSPs, and Cascades:} Instead of modeling opinion dynamics, recent works rather use available data to investigate more practically how posts spread within OSPs. The authors in \cite{EC2012} describe diffusion patterns that arise in specific online domains. Data analysis of large Facebook cascades is performed in \cite{Adamic2013}. Interestingly, the authors in \cite{LescovecPredict} propose ways to predict cascade growth using machine learning tools.

\textit{User activity:}  In \cite{WhenP} the authors identify user activity as an important control tool for influence maximisation. Making extensive use of datasets, they study the appropriate times for a user to post or re-post in an OSP in order to maximise the probability of audience response.

An interesting analytical effort to relate user activity with OSP design and post diffusion is made in \cite{SmartB16}. The authors use temporal point processes to model posting and re-posting activity of a user. They highlight the importance of the Newsfeed in post propagation and map user activity to post visibility, building on the idea that a post can be adopted by a follower when it is visible on his Newsfeed and not pushed away by competing posts. Their model, however, treats only a single user Newsfeed and does not consider the dynamics of the entire social graph. Furthermore, the dynamics of the Newsfeed list are inaccurately mimicked by a FIFO queue (see also \cite{TimelinesMenache} for another interesting approach using FIFO).

\subsection{Our Contributions}
In this work we propose an analytical model for post diffusion in OSPs, which considers the entire social graph and allows users to generate new posts, or share on their Wall existing posts they find on their Newsfeed. 
The system is described in Section \ref{systemd}.
By incorporating the Wall and Newsfeed lists, we allow posts from different origin to compete for the attention of each user. 
Re-posting activity allows shared posts to further become visible to the users' followers. Re-posting thus plays the role of an information ``valve". To the best of our knowledge such analytical model is completely original. 

The model describing the generic OSP is presented in Section \ref{models}. To cope with the enormous state-space of the initial model, we introduce an accurate approximation which decomposes the state description. We further simplify the model by focusing on posts from a single user while aggregating all the rest. In steady-state, the ``aggregated'' description results in a system of linear equations for the unknown influence probabilities. Its closed-form solution is given in Th.~\ref{Main}. Conditions for the existence of a solution, and an iterative method for the solution are provided in Section \ref{closed}. \textit{The code is available in the INFOCOM ieee final version}. 
Extended numerical experiments (Section \ref{numanal}) verify the validity of our model, and highlight the importance of user activity and OSP structure in better understanding the spread of posts in OSPs. Conclusions are drawn in Section \ref{conclu}.

\section{System description}
\label{systemd}


Let us first describe a generic social network platform, such as Facebook or Twitter. A set of users generate and share some content, denoted as \textit{posts}, through the platform. Each user has a list of \textit{followers} and a list of \textit{leaders}. A user can simultaneously be follower and/or leader of others. As a follower, he (she) is interested in the content posted by his (her) leaders. With each user two lists of posts are associated, namely a \textit{Newsfeed} and a \textit{Wall}.
A user's Newsfeed is constantly fed by the content that all of his leaders post on their Walls. A user's Wall is fed (i) by his self-generated posts that draw influence from the ``outside world'', and (ii) by posts that he shares from his Newsfeed. Hence, a user's Wall is a list of self-posts and re-posts. The generic social network platform is illustrated in Figure~\ref{system}.

\begin{figure}[h!]
\centering
\includegraphics[width=0.8\linewidth]{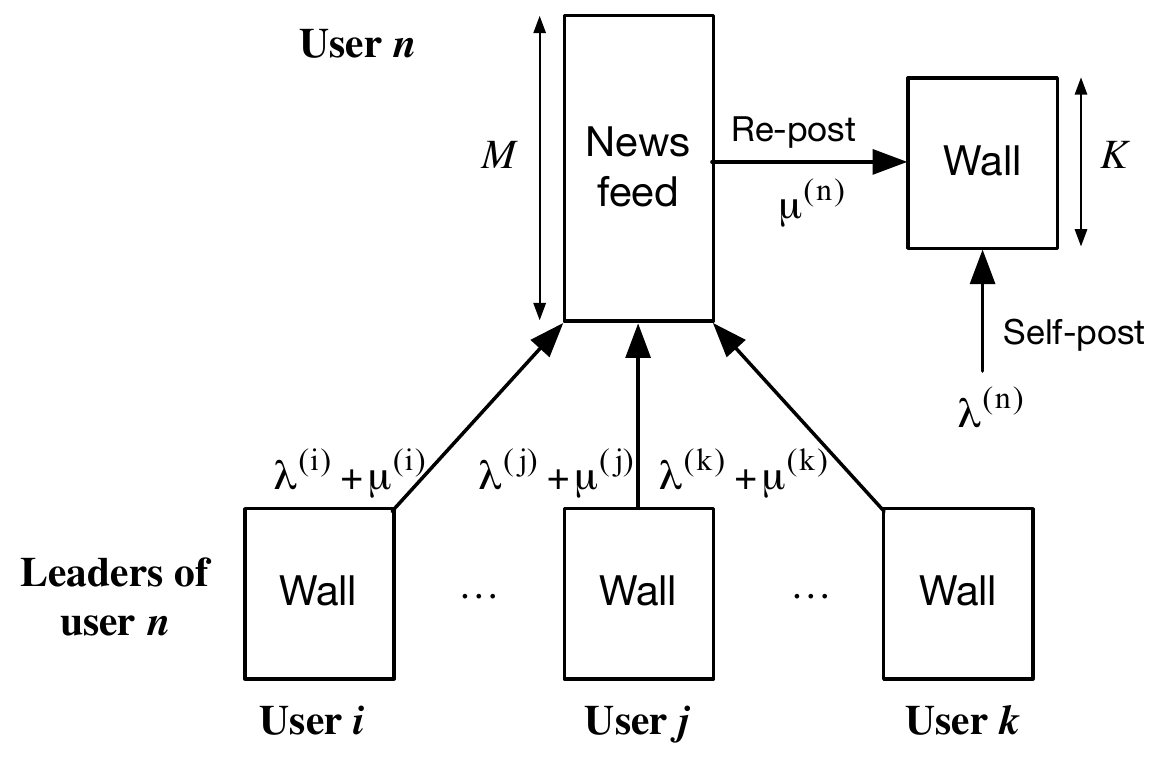}
\caption{The social platform from the point of view of user $n$.\label{system}}
\end{figure}

\subsection{Assumptions on the system and notations}


We consider a constant number $N$ of active users, forming the set $\mathcal{N}$. Users are labelled by an index $n=1,\ldots,N$. We denote by $\mathcal{F}^{(n)}$ and $\mathcal{L}^{(n)}$ the list of followers and the list of leaders of user $n$. Without loss of generality, we draw the directed Leader-graph $\mathcal{G}=(\mathcal{N},\mathcal{E})$. Each pair of nodes $(i,j)\in\mathcal{E}$, corresponds to a directed edge from $i$ to $j$, when $i$ is a leader of $j$, i.e., $i\in \mathcal{L}^{(j)}$. We denote by $\mathbf{L}$ the $N\times N$ adjacency matrix of the Leader-graph, whose coefficients are given by: $\ell_{i,j}=\mathbf{1}_{\left\{i\in \mathcal{L}^{(j)}\right\}}$,
where $\mathbf{1}_{\left\{.\right\}}$ is the indicator function. We assume that each user $n$ has at least one leader, i.e., $\mathcal{L}^{(n)} \neq \emptyset$, $\forall n$. 
The Follower-matrix is by definition $\mathbf{F}:=\mathbf{L}^T$.

The sizes of both Wall and Newsfeed are considered to be constant.
We thus fix $K$ the size of a Wall (total number of posts found on the Wall of each user) and $M$ the size of a Newsfeed. This is reasonable if we assume that only a certain number of most recent posts is considered relevant, and users don't tend to scroll down to access older posting history.


We denote by $\lambda^{(n)}$ $\mathrm{[posts/unit~time]}$ the rate with which user $n$ generates new posts on his Wall, and by $\mu^{(n)}$ the rate with which user $n$ visits his Newsfeed and selects one of the $M$ entries to re-post on his Wall (note here that each visit implies re-posting). As a result, posts arrive on the $n$-th Wall with a total rate $\lambda^{(n)}+\mu^{(n)}$ $\mathrm{[posts/unit~time]}$. Additionally, we make the assumption that content posted on a users's Wall instantaneously appears on the Newsfeeds of his followers. As a result,  the input rate of posts in the $n$-th Newsfeed, is $\sum_{j \in \mathcal{L}^{(n)}} ( \lambda^{(j)}+\mu^{(j)})$. Given that the two lists associated per user have fixed size, then with each new entry one element has to be removed from the list and replaced by the new one.
For the user activity we require $\lambda^{(n)}+\mu^{(n)}>0$, $\forall n$. 


Finally, any post originally generated by a given user $n$ takes as label the author's index $n$, and will keep this label throughout its lifespan inside the network.

\subsection{Influence metric of interest}
\label{metrics}

The aim is to estimate the influence of a specific user, say user $i$, over the entire network. In order to define the metric of interest, we first define the \textit{influence} of user $i$ on user $n$, $q_i^{(n)}$, as the steady-state probability that a post found on the Wall of user $n$ is of label $i$, i.e., has been originally created by user $i$. Note that these probabilities are performance parameters that will be the output of the developed models. They obviously satisfy $\sum_{i=1}^{N}q_i^{(n)}=1$, $\forall n$. With the above, we propose the following metric of influence,
\begin{equation}
\label{aim1}
\Psi_i = \frac{1}{N-1} \sum_{n \neq i} q_i^{(n)}.
\end{equation}
It corresponds to the average probability that a post on the Wall of any user $n\neq i$, has origin $i$. Note here that the suggested metric averages over all users in the network, but excludes the original user $i$. Other metric definitions are also possible. Since an influence score is associated with each user, the social users can be ranked by decreasing order of their influence.
%

\section{Markovian models}
\label{models}

\subsection{Modeling assumptions}
\label{assumptions}


For the analysis, we make the following assumptions:

\begin{itemize}

\item \textbf{Poisson arrivals}. For any user $n$ the generation of new posts on his Wall follows a Poisson process with rate $\lambda^{(n)}$ and the re-posting activity from his Newsfeed follows a Poisson process with rate $\mu^{(n)}$.

\item \textbf{Random selection}. When a user visits his own Newsfeed, we assume that he selects at random one of the $M$ entries to re-post on his Wall.

\item \textbf{Random eviction}. A novel entry on the Wall or Newsfeed list will push out an older entry of random position. 


\end{itemize}

Thanks to these assumptions, the resulting models developed in the following are Markovian. Indeed, all inter-arrival times between posts and re-posts are exponential and all choices are probabilistic. The random selection is consistent with common practice in real life, because it is the actual content of posts rather than the order of appearance in the Newsfeed which plays major role in the decision to re-post. The random eviction from the Wall is less realistic, since new entries are normally placed at the top of the Wall list (which would correspond to an oldest eviction policy). The validity and robustness of all these assumptions will be evaluated through simulations. As will be seen in Section~\ref{robust} they have a very limited impact on performance.

\subsection{Detailed Model}
\label{State}
The full state-description for this system is an N-tuple $(\mathbf{U}^{(1)},\ldots,\mathbf{U}^{(N)})$, where $\mathbf{U}^{(n)}=(\mathbf{x}^{(n)},\mathbf{y}^{(n)})$ is the state of user $n$ (at a given time $t$, omitted in notations for sake of clarity). $\mathbf{x}^{(n)}$ is the state of his Newsfeed and $\mathbf{y}^{(n)}$ the state of his Wall.
The random eviction and random selection assumptions allow to describe the system-state evolution without using information over the order of posts in the lists. Then, $\mathbf{x}^{(n)}=(x_1^{(n)},\ldots,x_N^{(n)})$, where $x^{(n)}_i$ counts the number of posts with user-origin $i$ found in the Newsfeed of user $n$.
Similarly,  $\mathbf{y}^{(n)}=(y_1^{(n)},\ldots,y_N^{(n)})$ where $y^{(n)}_i$ counts the number of posts with origin $i$ found in the Wall of user $n$.

With all the assumptions described in Section~\ref{assumptions}, it can be shown that the stochastic process resulting from the full state-description $(\mathbf{U}^{(1)},\ldots,\mathbf{U}^{(N)})$, is a continuous-time Markov chain model. As such, its steady-state can theoretically be solved. However, even for very small values of the system parameters the number of states will be enormous, whereas the state of a user's Newsfeed and Wall is coupled with the state of other users. As a result, any solution using, e.g., a numerical method, would be computationally intractable. For this reason we introduce in the next subsection a simple approximation that decouples the state-space and considerably reduces the solution complexity with negligible loss in precision.


Before presenting the decomposed model, it is important to understand where the coupling between states of different users appears. Consider user $n$ and focus on label $i$ posts. A leader $k$ of user $n$ will re-post from his own Newsfeed to his own Wall a post of label $i$ with probability $x_i^{(k)}/M$, due to the random selection policy. This post will appear immediately in the Newsfeed of user $n$, thus changing its state $\mathbf{y}^{(n)}$. Hence, the evolution of the state of user $n$ depends not only on his own current state and on his own activity, but also on the states and activity of all of his leaders.

\subsection{Decomposed Model}
\label{DetMod}

We now develop an approximate Markovian model which will eventually lead to a closed-form solution for the system's steady-state. The main idea is the following: for a given user $n$, the state transitions of his Newsfeed and Wall will still be a function of his own current state and activity, as well as the activity of all of his leaders. But they will not depend anymore on the current states of the user's leaders (as shown above), rather on their \textit{average probabilities in steady-state}, which at this point are unknown values.
In this way, the original full state-description can be decomposed into $2N$ independent state-descriptions resulting in $2N$ decoupled 
Markov Chains, each one associated with the Newsfeed and the Wall of a given user. The coupling between the produced $2N$ Markov Chains exists only through the unknown values of the steady-state probabilities.


\textit{For more details we refer the reader to the INFOCOM ieee version.}

\subsection{Aggregated Model}
\label{AggMod}

The state-space of the decoupled Markov chains associated with both Newsfeed and Wall of a user can still be very large. Each chain is N-dimensional. For any feasible state of the $n$-th Newsfeed, i.e., combination of posts $\mathbf{x}^{(n)}=(x_1^{(n)},\ldots,x_N^{(n)})$, it holds that $\sum_{i=1}^N x^{(n)}_i=M$. Hence, the size of its state-space corresponds to the number of ways to put $M$ undifferentiated objects in $N$ distinct boxes, and is equal to $\binom{M+N-1}{M}$.
Similarly, any state $\mathbf{y}^{(n)}=(y_1^{(n)},\ldots,y_N^{(n)})$ of the Wall of user $n$  is such that $\sum_{i=1}^N y^{(n)}_i=K$. As a result, the size of the Wall state-space is equal to $\binom{K+N-1}{K}$. These state-spaces can still be huge for large values of $N,M$, or $K$.

We now present an aggregated Markov chain model that gets around the problem by considering a reduced state-space, but \textit{without introducing any additional approximation}. As a result, the aggregated model has the same accuracy as the decomposed model presented in the previous section.

Our starting point here is the decomposed model. The idea is to focus on a particular user $i$ and develop an aggregated model that will only be able to calculate the influence of user $i$ on the entire network. Of course, one can successively apply the technique to all $i=1,\ldots,N$ in order to determine eventually the influence of all users.

We thus particularize a given user $i$ and aggregate the state-space of the system as follows. 


\textit{For more details we refer the reader to the INFOCOM ieee version.}

\section{Closed Form solution}
\label{closed}

After analytical calculations, we can obtain the following very simple \textbf{balance} expression for the Newsfeed of user $i$ and posts of origin $i$:
\begin{equation}
\label{Cpi}
p_i^{(i)}\sum_{k\in \mathcal{L}^{(i)}}\left(\lambda^{(k)}+\mu^{(k)}\right) = \sum_{k\in \mathcal{L}^{(i)}}\mu^{(k)}p_i^{(k)}
\end{equation}

Similarly, we get for the Newsfeed of any user $j\neq i$, and posts of origin $i$, the following {\textbf{balance equation}: 
\begin{equation}
\label{Cpj}
p_i^{(j)}\sum_{k \in \mathcal{L}^{(j)}}\left(\lambda^{(k)}+\mu^{(k)}\right) = \lambda^{(i)}\mathbf{1}_{\left\{i\in \mathcal{L}^{(j)}\right\}} + \sum_{k \in \mathcal{L}^{(j)}}\mu^{(k)}p_i^{(k)}.
\end{equation}

\textit{For details on the derivation of the above balance equations please refer to the INFOCOM ieee version. Note however, that these have also an intuitive interpretation. Equation (\ref{Cpi}) is an equality of two rates. On the right hand is the rate that posts of origin $i$ enter the Newsfeed of user $i$, after being selected from his leaders' Newsfeeds. On the left hand is the rate that posts of origin $i$ leave the Newsfeed of user $i$. This is just the total incoming rate, thinned by the probability that a post is of origin $i$. Remember that for our list the incoming rate is equal to the outgoing rate (no loss of posts). Hence the equation balances the ingress and egress flow of posts of origin $i$ in the Newsfeed $i$. Similar reasoning holds for equation (\ref{Cpj}). }

%

In the same fashion, the steady-state probabilities for the Wall can be directly derived from the steady-state probabilities for the Newsfeed as (remember $\lambda^{(n)}+\mu^{(n)}>0$, $\forall n$):
\begin{eqnarray}
\label{Cqi}
q_i^{(i)} & = & \frac{\lambda^{(i)}}{\lambda^{(i)}+\mu^{(i)}} + \frac{\mu^{(i)}}{\lambda^{(i)}+\mu^{(i)}}\cdot p_i^{(i)},\\
\label{Cqj}
q_i^{(j)} & = & \frac{\mu^{(j)}}{\lambda^{(j)}+\mu^{(j)}}\cdot p_i^{(j)}.
\end{eqnarray}



The above analysis gives an important structural property of the steady-state solution as a side product.

\begin{theo}[\textbf{Insensitivity in list size}]
The steady-state probabilities to find posts from user $i$ on the Wall of any user $n$ ($q_i^{(n)}$, $n=1,\ldots,N$) as well as on the Newsfeed of any user $n$ ($p_i^{(n)}$, $n=1,\ldots,N$), depend neither on the size $M$ of the Newsfeed, nor on the size $K$ of the Wall.
\end{theo}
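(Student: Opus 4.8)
The plan is to show that the whole linear system that pins down the steady state — the balance equations (\ref{Cpi}) and (\ref{Cpj}) for the Newsfeed probabilities $p_i^{(n)}$, together with the explicit formulas (\ref{Cqi}) and (\ref{Cqj}) that express the Wall probabilities $q_i^{(n)}$ in terms of them — involves neither $M$ nor $K$ as a parameter, and then to conclude by uniqueness of that system's solution. In other words, the theorem is a structural observation on equations (\ref{Cpi})--(\ref{Cqj}) rather than a fresh computation.

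First I would record precisely why the list sizes drop out of the balance equations, which is the conceptual point even though the algebra is already carried out in Section~\ref{closed}. The egress flow of origin-$i$ posts from the Newsfeed of $n$ is the total arrival rate $\sum_{k\in\mathcal{L}^{(n)}}(\lambda^{(k)}+\mu^{(k)})$ thinned by the probability $p_i^{(n)}$ that the randomly evicted post carries label $i$; uniform eviction turns this thinning into a plain multiplication by $p_i^{(n)}$, independently of how many posts $M$ the list holds. Symmetrically, a leader $k$ feeds origin-$i$ posts into the Newsfeed of $n$ at rate $\mu^{(k)}$ times the fraction of its own Newsfeed bearing label $i$; under random selection that fraction equals $p_i^{(k)}$, so the factor $1/M$ coming from picking one of the $M$ entries is exactly compensated and $M$ disappears again. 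Fresh self-posts enter only through the term $\lambda^{(i)}\mathbf{1}_{\{i\in \mathcal{L}^{(n)}\}}$, which features neither $M$ nor $K$. Hence the coefficient matrix and the right-hand side of the system (\ref{Cpi})--(\ref{Cpj}) are functions of the Leader-graph $\mathbf{L}$ and the activity rates $(\lambda^{(k)},\mu^{(k)})$ alone.

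Next I would invoke the existence-and-uniqueness statement established for this linear system in Section~\ref{closed}: under those conditions the vector $(p_i^{(1)},\ldots,p_i^{(N)})$ is the unique solution of a system whose data do not involve $M$ or $K$, and is therefore itself independent of $M$ and $K$. Plugging this vector into (\ref{Cqi})--(\ref{Cqj}), which only contain the ratios $\lambda^{(n)}/(\lambda^{(n)}+\mu^{(n)})$ and $\mu^{(n)}/(\lambda^{(n)}+\mu^{(n)})$ (well defined since $\lambda^{(n)}+\mu^{(n)}>0$), yields each $q_i^{(n)}$ as a quantity that inherits the same independence. Iterating the construction over $i=1,\ldots,N$ then covers all the probabilities named in the statement.

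The only real subtlety — hence the part to be careful about — is that the argument must rest on a genuine characterisation of the steady state, not merely on one particular way of solving: one has to be sure that (\ref{Cpi})--(\ref{Cpj}) (together with the normalisation built into the aggregated construction) determine the $p_i^{(n)}$ \emph{uniquely}, for otherwise ``the solution'' would be ambiguous and could a priori be chosen in an $M$-dependent way. This is exactly what the existence result of Section~\ref{closed} supplies, so the theorem follows once that result is in place. In particular no comparison with the intractable detailed chain of Section~\ref{State} is needed, since the claim concerns the output of the decomposed/aggregated model, whose defining equations are (\ref{Cpi})--(\ref{Cqj}).
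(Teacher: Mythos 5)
Your proposal is correct and follows essentially the same route as the paper, whose proof is the one-line observation that the $2N$ equations (\ref{Cpi})--(\ref{Cpj}) and (\ref{Cqi})--(\ref{Cqj}) contain neither $M$ nor $K$. Your additional remark that uniqueness of the solution (from Lemma \ref{Lemma2} and Theorem \ref{Main}) is needed to turn "the defining equations are $M,K$-free" into "the probabilities are $M,K$-free" is a sound and worthwhile sharpening of the same argument, not a different one.
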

\begin{proof}
The proof directly comes from the set of $2N$ equations (\ref{Cpi})-(\ref{Cpj}) for the Newsfeeds and (\ref{Cqi})-(\ref{Cqj}) for the Walls, that depend neither on $M$ nor on $K$.
\end{proof}

\subsection{Linear system}

We can re-write {(\ref{Cpi})-(\ref{Cpj}) and (\ref{Cqi})-(\ref{Cqj}) for posts with label $i$} in a compact form and summarize our findings as follows.

\begin{theo}[\textbf{Linear System}]
\label{th2}
The unknown column vectors $\mathbf{p}_i:=(p_i^{(1)},\ldots,p_i^{(N)})^T$ and $\mathbf{q}_i:=(q_i^{(1)},\ldots,q_i^{(N)})^T$ are the solution of the following linear system
\begin{eqnarray}
\label{LSa1}
\mathbf{p}_i & = & \mathbf{A}\cdot \mathbf{p}_i + \mathbf{b}_i\\
\label{LSb1}
\mathbf{q}_i & = & \mathbf{C}\cdot \mathbf{p}_i + \mathbf{d}_i.
\end{eqnarray}
\end{theo}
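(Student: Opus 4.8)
The plan is to show that Theorem~\ref{th2} is nothing more than a matrix-form rewriting of the scalar balance equations (\ref{Cpi})--(\ref{Cpj}) for the Newsfeeds and (\ref{Cqi})--(\ref{Cqj}) for the Walls, once each Newsfeed equation is normalized by its (strictly positive) total ingress rate. So the proof is essentially bookkeeping: assemble the $2N$ scalar identities into two vector identities and read off the coefficients of $\mathbf{A},\mathbf{b}_i,\mathbf{C},\mathbf{d}_i$.

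First I would introduce, for each user $n$, the shorthand $\Lambda^{(n)} := \sum_{k\in\mathcal{L}^{(n)}}(\lambda^{(k)}+\mu^{(k)})$. This quantity is strictly positive, since by assumption every user has at least one leader ($\mathcal{L}^{(n)}\neq\emptyset$) and $\lambda^{(k)}+\mu^{(k)}>0$ for all $k$; checking this positivity is the only genuine point of care, because it is what makes dividing (\ref{Cpi}) and (\ref{Cpj}) through by $\Lambda^{(n)}$ legitimate. Carrying out that division yields, for every $n$,
\begin{equation*}
p_i^{(n)} \;=\; \frac{\lambda^{(i)}}{\Lambda^{(n)}}\,\ell_{i,n}\,\mathbf{1}_{\{n\neq i\}} \;+\; \sum_{k\in\mathcal{L}^{(n)}}\frac{\mu^{(k)}}{\Lambda^{(n)}}\,p_i^{(k)},
\end{equation*}
where the factor $\mathbf{1}_{\{n\neq i\}}$ records that the forcing term $\lambda^{(i)}\mathbf{1}_{\{i\in\mathcal{L}^{(n)}\}}$ appears in (\ref{Cpj}) but is absent from (\ref{Cpi}).

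Next I would simply read off the matrix entries. Define $\mathbf{A}$ by $A_{n,k} := \mu^{(k)}\,\ell_{k,n}/\Lambda^{(n)}$, so that the restricted sum over $k\in\mathcal{L}^{(n)}$ becomes a full matrix--vector product (the terms with $\ell_{k,n}=0$ vanishing), and define $\mathbf{b}_i$ by $b_i^{(n)} := \lambda^{(i)}\ell_{i,n}/\Lambda^{(n)}$ for $n\neq i$ and $b_i^{(i)}:=0$. With these identifications the displayed scalar identity is precisely the $n$-th row of $\mathbf{p}_i=\mathbf{A}\mathbf{p}_i+\mathbf{b}_i$, which is (\ref{LSa1}). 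For the Walls I would argue identically: using $\lambda^{(n)}+\mu^{(n)}>0$ so that the fractions are well defined, set $\mathbf{C}$ to be the diagonal matrix with $C_{n,n}:=\mu^{(n)}/(\lambda^{(n)}+\mu^{(n)})$, and set $\mathbf{d}_i$ to be the vector with $d_i^{(i)}:=\lambda^{(i)}/(\lambda^{(i)}+\mu^{(i)})$ and $d_i^{(j)}:=0$ for $j\neq i$; then (\ref{Cqi})--(\ref{Cqj}) is exactly the row-wise statement of $\mathbf{q}_i=\mathbf{C}\mathbf{p}_i+\mathbf{d}_i$, i.e.\ (\ref{LSb1}).

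I do not expect a real obstacle here, since the content of the statement is the repackaging itself rather than any new derivation. The two things worth making explicit are: (i) that $\Lambda^{(n)}>0$ and $\lambda^{(n)}+\mu^{(n)}>0$ guarantee every division used in forming $\mathbf{A},\mathbf{b}_i,\mathbf{C},\mathbf{d}_i$ is valid; and (ii) that the matrices $\mathbf{A}$ and $\mathbf{C}$ do not depend on the chosen label $i$ — only the forcing vectors $\mathbf{b}_i$ and $\mathbf{d}_i$ do — which is what makes it meaningful to solve the same linear system for each $i=1,\ldots,N$ and which sets up the existence and fixed-point discussion in Section~\ref{closed}.
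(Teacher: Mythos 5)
Your proposal is correct and matches the paper's (implicit) argument: the paper presents Theorem~\ref{th2} as a direct matrix rewriting of the balance equations (\ref{Cpi})--(\ref{Cpj}) and (\ref{Cqi})--(\ref{Cqj}), with exactly the entries you read off (cf.\ Table~\ref{T2}, including $b_{i,i}=0$ and $\mathbf{C}$ diagonal). Your explicit check that $\sum_{k\in\mathcal{L}^{(n)}}(\lambda^{(k)}+\mu^{(k)})>0$ and $\lambda^{(n)}+\mu^{(n)}>0$ justify the divisions is the right point of care and is consistent with the paper's standing assumptions.
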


In the above, $\mathbf{A}$ and $\mathbf{C}$ are $N\times N$ matrices independent of $i$, whereas $\mathbf{b}_i$ and $\mathbf{d}_i$ are N-column vectors that depend on $i$. Hence, a standard linear system should be resolved for each $i$. The entries of the above matrices and vectors are summarised in Table \ref{T2}. It is interesting to note that $a_{j,j}=0$ for all $j$, $b_{i,i}=0$, $\mathbf{C}$ is diagonal, and also there is a unique positive $d_{i,j}$ entry for $i=j$. 

\begin{table}[ht!]
\centering
\begin{tabular}{|c|c|}
\hline
& \\
$\mathbf{A}$ & $a_{j,k}:=\frac{\mu^{(k)}}{\sum_{\ell\in\mathcal{L}^{(j)}}(\lambda^{(\ell)}+\mu^{(\ell)})}\mathbf{1}_{\left\{k\in\mathcal{L}^{(j)}\right\}}$\\
&  \\
\hline
&  \\
$\mathbf{b}_i$ & $b_{j,i} : = \frac{\lambda^{(i)}}{\sum_{\ell\in\mathcal{L}^{(j)}}(\lambda^{(\ell)}+\mu^{(\ell)})}\mathbf{1}_{\left\{i\in\mathcal{L}^{(j)}\right\}}$\\
&  \\
\hline
& \\
$\mathbf{C}$ & $c_{j,k}:=\frac{\mu^{(j)}}{\lambda^{(j)}+\mu^{(j)}}\mathbf{1}_{\left\{j=k\right\}}$\\
&  \\
\hline
 &  \\
$\mathbf{d}_i$ & $d_{j,i}:=\frac{\lambda^{(i)}}{\lambda^{(i)}+\mu^{(i)}}\mathbf{1}_{\left\{j=i\right\}}$\\
& \\
\hline
\end{tabular}
\caption{Entries for the matrices/vectors of the linear system.}
\label{T2}
\end{table}

The matrix $\mathbf{A}$ is non-negative. In addition, it is row sub-stochastic, meaning that the sum of all its rows is less or equal to $1$, with at least one row sum strictly less than $1$ (if we reasonably assume that at least one user injects self-posts). Another interesting property is that $\mathbf{A}$ is a weighted version of the Follower-matrix $\mathbf{F}=\mathbf{L}^T$, so that if $\mathbf{1}_{\left\{j\in\mathcal{F}^{(k)}\right\}} = \mathbf{1}_{\left\{k\in\mathcal{L}^{(j)}\right\}} = 0 \Rightarrow a_{j,k}=0$. There are cases however where $j$ follows $\ell$, but $a_{j,\ell}=0$ in the matrix $\mathbf{A}$, because $\mu^{(\ell)}=0$. Hence, users that never re-post from their Newsfeed 
alter the possibilities of post propagation in the graph. This is why we call $\mathbf{A}$, the \textit{propagation matrix}.

\subsection{Closed-form solution}
We would like to know under which conditions a solution to the linear system in (\ref{LSa1}) - and as a consequence (\ref{LSb1}) - exists. To this aim we first recall the following known Lemma, where $\mathbf{I}_N$ is the $N\times N$ identity matrix. It relates the solution of our system to the spectral radius of $\mathbf{A}$, denoted by $\rho(\mathbf{A})$.

\begin{lem}\cite[Chapter 6, Lemma 2.1]{BerPleNN}
\label{Lemma1}
Given a nonnegative matrix $\mathbf{T}\in\mathbb{R}_+^{N\times N}$, its spectral radius is $\rho(\mathbf{T})<1$ if and only if $(\mathbf{I}_N - \mathbf{T})^{-1}$ exists, which can be written as the series
\begin{eqnarray}
\label{series}
(\mathbf{I}_N - \mathbf{T})^{-1} & = & \sum_{n=0}^{\infty} \mathbf{T}^n\geq 0.
\end{eqnarray}
\end{lem}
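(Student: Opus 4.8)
The plan is to prove both implications of the stated equivalence, reading the right‑hand side as the conjunction ``$(\mathbf{I}_N-\mathbf{T})^{-1}$ exists \emph{and} equals $\sum_{n=0}^{\infty}\mathbf{T}^n$''; observe that nonnegativity of $\mathbf{T}$ is needed \emph{only} for the entrywise inequality $\sum_{n=0}^{\infty}\mathbf{T}^n\geq 0$, the remainder of the argument being valid for an arbitrary complex $N\times N$ matrix.

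For the ``only if'' direction, assume $\rho(\mathbf{T})<1$. Then $1$ is not an eigenvalue of $\mathbf{T}$, so $\mathbf{I}_N-\mathbf{T}$ is invertible. I would then introduce the partial sums $\mathbf{S}_m:=\sum_{n=0}^{m}\mathbf{T}^n$ and use the telescoping identity $(\mathbf{I}_N-\mathbf{T})\mathbf{S}_m=\mathbf{I}_N-\mathbf{T}^{m+1}$. The core analytic input is that $\rho(\mathbf{T})<1$ forces $\mathbf{T}^{m}\to\mathbf{0}$ as $m\to\infty$; granting this, $(\mathbf{I}_N-\mathbf{T})\mathbf{S}_m\to\mathbf{I}_N$, and left‑multiplying by $(\mathbf{I}_N-\mathbf{T})^{-1}$ yields $\mathbf{S}_m\to(\mathbf{I}_N-\mathbf{T})^{-1}$, i.e. the Neumann series converges to the inverse. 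Finally, since $\mathbf{T}\geq 0$, each $\mathbf{S}_m$ is a finite sum of products of entrywise‑nonnegative matrices, hence $\mathbf{S}_m\geq 0$; as the cone of entrywise‑nonnegative matrices is closed, passing to the limit gives $(\mathbf{I}_N-\mathbf{T})^{-1}=\lim_m\mathbf{S}_m\geq 0$.

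For the ``if'' direction, assume $(\mathbf{I}_N-\mathbf{T})^{-1}$ exists and equals $\sum_{n=0}^{\infty}\mathbf{T}^n$. Convergence of this series in particular forces its general term to vanish, $\mathbf{T}^n\to\mathbf{0}$, and invoking once more the equivalence between $\mathbf{T}^n\to\mathbf{0}$ and $\rho(\mathbf{T})<1$ closes the argument.

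The main (indeed essentially the only non‑routine) obstacle is the spectral fact $\mathbf{T}^m\to\mathbf{0}\iff\rho(\mathbf{T})<1$. I would establish it either from the Jordan canonical form of $\mathbf{T}$ — a Jordan block for eigenvalue $\lambda$ raised to the $m$‑th power has entries that are polynomials in $m$ times powers of $\lambda$, which tend to $0$ exactly when $|\lambda|<1$ — or, more economically, from Gelfand's formula $\rho(\mathbf{T})=\lim_m\|\mathbf{T}^m\|^{1/m}$: if $\rho(\mathbf{T})<r<1$ then $\|\mathbf{T}^m\|\leq C r^m$ for large $m$, giving simultaneously $\mathbf{T}^m\to\mathbf{0}$ and absolute convergence of $\sum_m\mathbf{T}^m$, while conversely $\mathbf{T}^m\to\mathbf{0}$ forces $\|\mathbf{T}^m\|^{1/m}<1$ eventually, hence $\rho(\mathbf{T})<1$. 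Since the statement is quoted verbatim from \cite{BerPleNN}, a proof that merely cites that source — supplemented, if desired, by the telescoping identity above to make the series form explicit — is equally acceptable.
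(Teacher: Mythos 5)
Your proof is correct. Note, however, that the paper does not prove this lemma at all: it is quoted verbatim from Berman and Plemmons \cite[Chapter 6, Lemma 2.1]{BerPleNN} and used as a black box, so there is no in-paper argument to compare against. What you supply is the standard self-contained Neumann-series proof: the telescoping identity $(\mathbf{I}_N-\mathbf{T})\mathbf{S}_m=\mathbf{I}_N-\mathbf{T}^{m+1}$ combined with the spectral fact $\mathbf{T}^m\to\mathbf{0}\iff\rho(\mathbf{T})<1$ for the forward direction, closedness of the nonnegative cone for the entrywise inequality, and the vanishing of the general term of a convergent series for the converse. Your reading of the right-hand side as the conjunction ``the inverse exists \emph{and} equals the series'' is the correct one (mere invertibility of $\mathbf{I}_N-\mathbf{T}$ does not imply $\rho(\mathbf{T})<1$), and your observation that nonnegativity of $\mathbf{T}$ enters only in the inequality $\sum_n\mathbf{T}^n\geq 0$ is accurate. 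The argument is complete and matches the textbook proof the paper is implicitly relying on; nothing further is needed.
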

From the specific structure of the non-negative matrix $\mathbf{A}$ we have the following property.

\begin{lem}
\label{Lemma2}
$\rho(\mathbf{A})\leq 1$. Strict inequality is guaranteed in the following non-exclusive non-exhaustive cases (cs):
\begin{enumerate}[(cs1)]
\item $\lambda^{(n)}>0$, $\forall n\in\mathcal{N}$.
\item For every cycle in the Leader-graph, at least one participating user has a leader $k$ with positive self-post rate.
\end{enumerate}
\end{lem}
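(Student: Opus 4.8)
The plan is to use two standard facts about a non-negative square matrix $\mathbf{T}$: (a) $\rho(\mathbf{T})\le\|\mathbf{T}\|_\infty$, that is, the spectral radius never exceeds the largest row sum; and (b) the Perron--Frobenius theorem for non-negative matrices \cite{BerPleNN}, which guarantees that $\rho(\mathbf{T})$ is itself an eigenvalue of $\mathbf{T}$ with an associated eigenvector $\mathbf{u}\ge 0$, $\mathbf{u}\neq 0$. For the bound $\rho(\mathbf{A})\le 1$, I would read off from Table~\ref{T2} that the $j$-th row sum of $\mathbf{A}$ is $r_j=\frac{\sum_{k\in\mathcal{L}^{(j)}}\mu^{(k)}}{\sum_{\ell\in\mathcal{L}^{(j)}}(\lambda^{(\ell)}+\mu^{(\ell)})}$; this is well defined because $\mathcal{L}^{(j)}\neq\emptyset$ and $\lambda^{(n)}+\mu^{(n)}>0$ for all $n$, and $r_j\le 1$ since $\lambda^{(\ell)}\ge 0$. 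Hence $\rho(\mathbf{A})\le\max_j r_j\le 1$. Case (cs1) is then immediate: if $\lambda^{(n)}>0$ for every $n$, the numerator of each $r_j$ is strictly smaller than its (positive) denominator because the leader set is non-empty, so $r_j<1$ for all $j$ and $\rho(\mathbf{A})\le\max_j r_j<1$.

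For case (cs2) I would argue by contradiction, supposing $\rho(\mathbf{A})=1$. By fact (b) there is $\mathbf{u}\ge 0$, $\mathbf{u}\neq 0$, with $\mathbf{A}\mathbf{u}=\mathbf{u}$. Set $u^\star:=\max_\ell u_\ell>0$ and $S:=\{\,j:u_j=u^\star\,\}$, a non-empty set. For any $j\in S$ one has $u^\star=u_j=\sum_k a_{j,k}u_k\le r_j\,u^\star\le u^\star$, so all these inequalities are equalities; the equality $r_j\,u^\star=u^\star$ forces $r_j=1$, while the equality $\sum_k a_{j,k}u_k=r_j\,u^\star$ (a weighted average attaining its maximum) forces $u_k=u^\star$, i.e.\ $k\in S$, for every $k$ with $a_{j,k}>0$. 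Two consequences follow for every $j\in S$: first, by the formula for $r_j$, the identity $r_j=1$ forces $\lambda^{(\ell)}=0$ for all $\ell\in\mathcal{L}^{(j)}$, i.e.\ no leader of $j$ self-posts; second, $j$ has at least one out-neighbour in the directed graph $G_{\mathbf{A}}$ of $\mathbf{A}$ (an arc $j\to k$ whenever $a_{j,k}>0$), and all such out-neighbours lie in $S$.

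It remains to exhibit a forbidden cycle. The subgraph of $G_{\mathbf{A}}$ induced on $S$ is finite and has all out-degrees at least $1$, hence it contains a directed cycle; since $a_{j,j}=0$ for all $j$, this cycle has length at least two, say $j_1\to j_2\to\cdots\to j_m\to j_1$ with all $j_t\in S$. An arc $j_t\to j_{t+1}$ of $G_{\mathbf{A}}$ means $a_{j_t,j_{t+1}}>0$, hence $j_{t+1}\in\mathcal{L}^{(j_t)}$, i.e.\ $j_{t+1}$ is a leader of $j_t$; reversing all arcs therefore yields a directed cycle of the Leader-graph $\mathcal{G}$ on the same vertex set $\{j_1,\dots,j_m\}$. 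Applying hypothesis (cs2) to this cycle provides a participating user $j_t$ with some leader $k\in\mathcal{L}^{(j_t)}$ satisfying $\lambda^{(k)}>0$; but $j_t\in S$, and we showed that no leader of a vertex of $S$ self-posts --- a contradiction. Therefore $\rho(\mathbf{A})<1$ in case (cs2) as well.

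The only genuinely delicate point is the reducible case (cs2): one cannot simply argue ``irreducible matrix with a sub-unit row sum''. The crux is to extract from the Perron eigenvector the vertex set $S$ on which $\mathbf{A}$ restricts to a stochastic matrix, and then to match a cycle of the support graph of $\mathbf{A}$ with a cycle of the Leader-graph --- the two run in opposite directions, because $\mathbf{A}$ is, up to the re-posting weights, the transpose of the adjacency structure. Everything else is elementary bookkeeping with the row sums $r_j$ and the entries in Table~\ref{T2}.
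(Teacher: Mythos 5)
Your proof is correct. The general bound $\rho(\mathbf{A})\le 1$ and case (cs1) follow exactly the paper's route (the row-sum bound on the spectral radius from \cite[Theorem 8.1.22]{HornJohn}), but for case (cs2) you take a genuinely different and arguably cleaner path. The paper disposes of (cs2) in one line by invoking the Al'pin--Elsner--van den Driessche bound on the Perron root \cite[Theorem A]{EvD08}, after noting that the cycle hypothesis forces at least one row sum $r(j)<1$ along every cycle; that citation does the combinatorial work, but at a price: the bound needs the additional hypothesis $r(j)>0$ for all $j$ (every user has at least one re-posting leader), which the paper has to append inside the proof even though it is absent from the lemma's statement. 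You replace the citation with a self-contained maximum-principle argument on a Perron eigenvector: the set $S$ of coordinates attaining the maximum is closed under the out-neighbour relation of the support graph of $\mathbf{A}$, every $j\in S$ has row sum exactly $1$ (hence none of its leaders self-posts), and a directed cycle extracted from $S$ --- correctly reversed to become a cycle of the Leader-graph, since $\mathbf{A}$ is a weighted transpose of $\mathbf{L}$ --- contradicts (cs2). Your argument only needs $r(j)>0$ for $j\in S$, which you derive rather than assume, so it proves the lemma as stated without the paper's extra side condition. The one thing to tighten is bibliographic: cite the Perron--Frobenius statement valid for general, possibly reducible, non-negative matrices (e.g.\ \cite[Theorem 8.3.1]{HornJohn}), since irreducibility of $\mathbf{A}$ is not available here; the weak form you use ($\rho(\mathbf{A})$ is an eigenvalue with a non-negative eigenvector) is exactly what that theorem provides.
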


\begin{proof}
Let us denote the row~sums of $\mathbf{A}$ by $r(j)$, $j=1\ldots N$. Then $r(j)\leq 1$ by definition from Table \ref{T2}. It is known that (\cite[Theorem 8.1.22]{HornJohn}) the following bounds are valid for the spectral radius of a non-negative matrix: $\min_{j=1}^N r(j) \leq \rho(\mathbf{A}) \leq \max_{j=1}^N r(j)$. The right-hand side in our case is $1$ and the first part is proven. 

(cs1) When $\lambda^{(n)}>0$, $\forall n$, then $\forall j$ and $k\in\mathcal{L}^{(j)}$, $a_{j,k}< \mu^{(k)} / \sum_{\ell\in\mathcal{L}^{(j)}}\mu^{(\ell)}$, so that $r(j)<1$, $\forall j$. Then the matrix is \textit{strictly} sub-stochastic, and $\rho(\mathbf{A})\leq \max_{j=1}^N r(j)<1$.

(cs2) In this case, suppose the length of a particular cycle is $\gamma>1$ and the participating nodes are $n_1,\ldots,n_{\gamma}$. Then at least one row sum $r(j)<1$, $j\in\left\{n_1,\ldots,n_{\gamma}\right\}$. By direct application of the Al'pin, Elsner, van den Dreissche bound \cite[Theorem A]{EvD08}, we conclude that $\rho(\mathbf{A})<1$. An additional condition for this bound is that $r(j)>0$, $\forall j$, which is satisfied when $\mathcal{L}^{(j)}\neq \emptyset$, $\forall j\in\mathcal{N}$ and not all leaders of some user have $\mu^{(k)}=0$.
\end{proof}

\begin{rem}
A special instance of (cs2) is when $\mathbf{A}$ is irreducible and $\lambda^{(j)}>0$ for at least one $j\in\mathcal{N}$.
\end{rem}

\begin{theo}[\textbf{Solution}]
\label{Main}
For the two cases of Lemma \ref{Lemma2}, the solution of the linear system (\ref{LSa1})-(\ref{LSb1}) is unique, and given by
\begin{eqnarray}
\label{Th1a}
\mathbf{p}_i & = & \left(\mathbf{I}_N-\mathbf{A}\right)^{-1}\mathbf{b}_i\\
\label{Th1b}
\mathbf{q}_i & = & \mathbf{C} \left(\mathbf{I}_N-\mathbf{A}\right)^{-1}\mathbf{b}_i + \mathbf{d}_i.
\end{eqnarray}

\end{theo}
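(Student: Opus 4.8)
The plan is to chain the two lemmas that precede the statement. First I would invoke Lemma~\ref{Lemma2}: in either case (cs1) or (cs2), its conclusion sharpens to the strict inequality $\rho(\mathbf{A})<1$. This is the only point at which the hypotheses of the theorem enter; everything that follows is linear algebra that is valid whenever the spectral radius of $\mathbf{A}$ is strictly below one.

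Second, I would apply Lemma~\ref{Lemma1} with $\mathbf{T}=\mathbf{A}$. Since $\mathbf{A}$ is non-negative -- read off from Table~\ref{T2} -- and $\rho(\mathbf{A})<1$, the lemma yields that $(\mathbf{I}_N-\mathbf{A})^{-1}$ exists; in particular $\mathbf{I}_N-\mathbf{A}$ is non-singular (and, as a by-product, the inverse is itself non-negative, consistent with the $p_i^{(n)}$ being probabilities).

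Third, I would rewrite~(\ref{LSa1}) as $(\mathbf{I}_N-\mathbf{A})\,\mathbf{p}_i=\mathbf{b}_i$. Because $\mathbf{I}_N-\mathbf{A}$ is invertible, this equation has the unique solution $\mathbf{p}_i=(\mathbf{I}_N-\mathbf{A})^{-1}\mathbf{b}_i$, which is exactly~(\ref{Th1a}). Substituting this expression into~(\ref{LSb1}) immediately gives $\mathbf{q}_i=\mathbf{C}(\mathbf{I}_N-\mathbf{A})^{-1}\mathbf{b}_i+\mathbf{d}_i$, i.e.~(\ref{Th1b}); since $\mathbf{q}_i$ is an explicit affine image of the already uniquely determined $\mathbf{p}_i$, the pair $(\mathbf{p}_i,\mathbf{q}_i)$ is the unique solution of the coupled system.

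I do not expect a genuine obstacle here: the entire difficulty has been front-loaded into Lemma~\ref{Lemma2}, whose proof (via the row-sum bounds for the spectral radius of a non-negative matrix and the Al'pin--Elsner--van den Driessche refinement for the cyclic case) is where the combinatorial conditions on the Leader-graph are actually used. The only point deserving care in the writeup is to state explicitly that it is the \emph{invertibility} of $\mathbf{I}_N-\mathbf{A}$ that delivers uniqueness, and not merely the existence of the closed-form expressions.
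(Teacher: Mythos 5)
Your proposal is correct and follows essentially the same route as the paper: Lemma~\ref{Lemma2} gives $\rho(\mathbf{A})<1$ in both cases, Lemma~\ref{Lemma1} then yields the existence of $\left(\mathbf{I}_N-\mathbf{A}\right)^{-1}\geq 0$, and invertibility of $\mathbf{I}_N-\mathbf{A}$ delivers uniqueness of $\mathbf{p}_i$ and hence of $\mathbf{q}_i$. Your writeup is slightly more explicit than the paper's one-line proof, but there is no substantive difference.
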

\begin{proof}
Lemma \ref{Lemma2} guarantees that $\rho(\mathbf{A})<1$ in both cases, so that from Lemma $\ref{Lemma1}$ the inverse $\left(\mathbf{I}_N-\mathbf{A}\right)^{-1}\geq 0$ exists and the solution is unique.
\end{proof}

An interesting observation is that the inverse $(\mathbf{I}_N-\mathbf{A})^{-1}$ involved in the derivation of $\mathbf{p}_i$ (relation~(\ref{Th1a})) is independent of $i$. Thus, in the solution process the inverse should be calculated only once, and then applied to the expressions in (\ref{Th1a})-(\ref{Th1b}) for labels $i=1,\ldots,N$.

%

\subsection{Fixed-point algorithm}

For large $N$ it can be practically very difficult to calculate the inverse $\left(\mathbf{I}_N-\mathbf{A}\right)^{-1}$. A different way to proceed in order to solve the system (\ref{LSa1}) is to use an iterative approach.

\begin{figure*}[t]
\centering
	\subfigure{\includegraphics[scale=0.39]{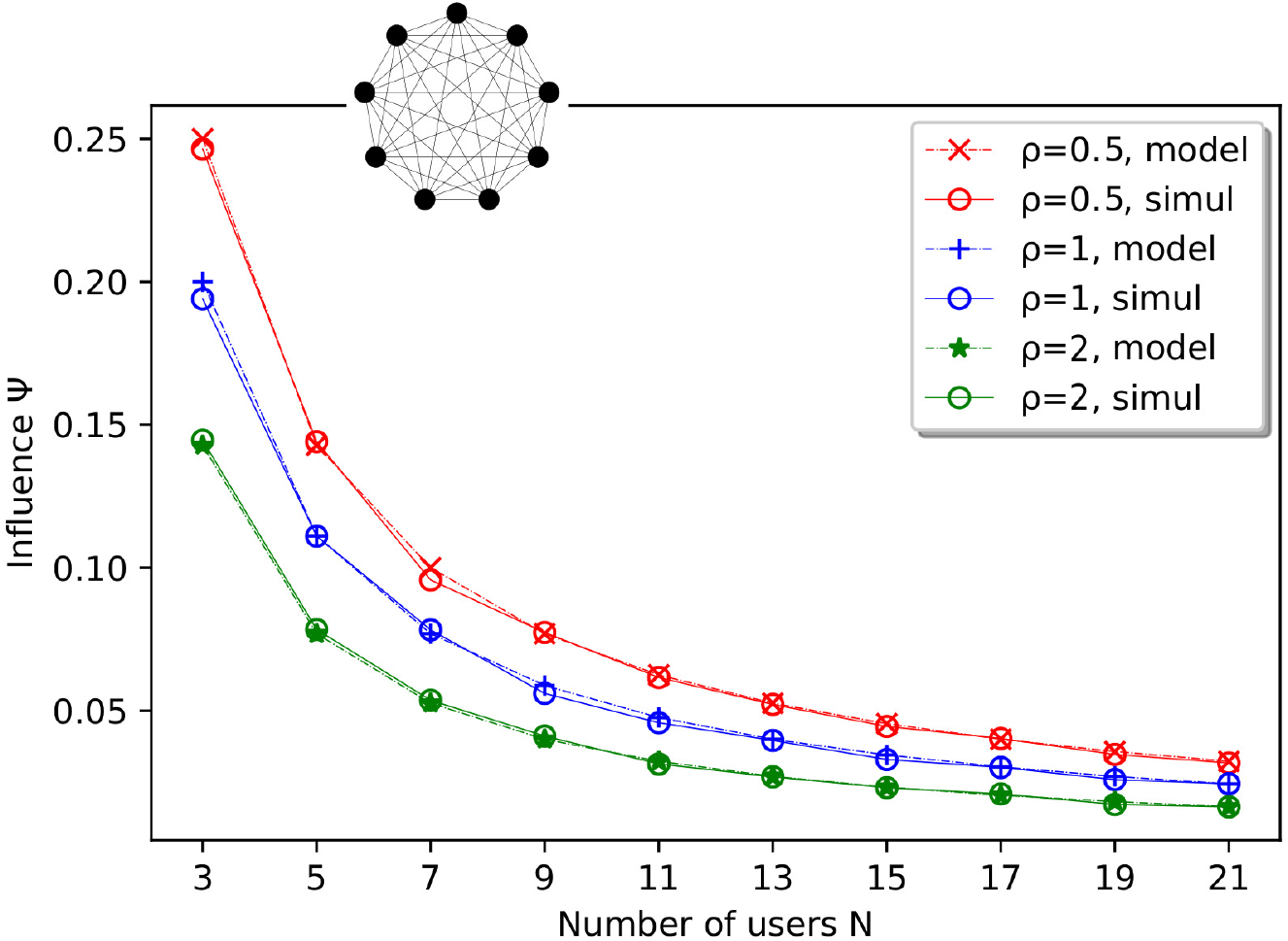}\label{validcomp}} 
	\subfigure{\includegraphics[scale=0.39]{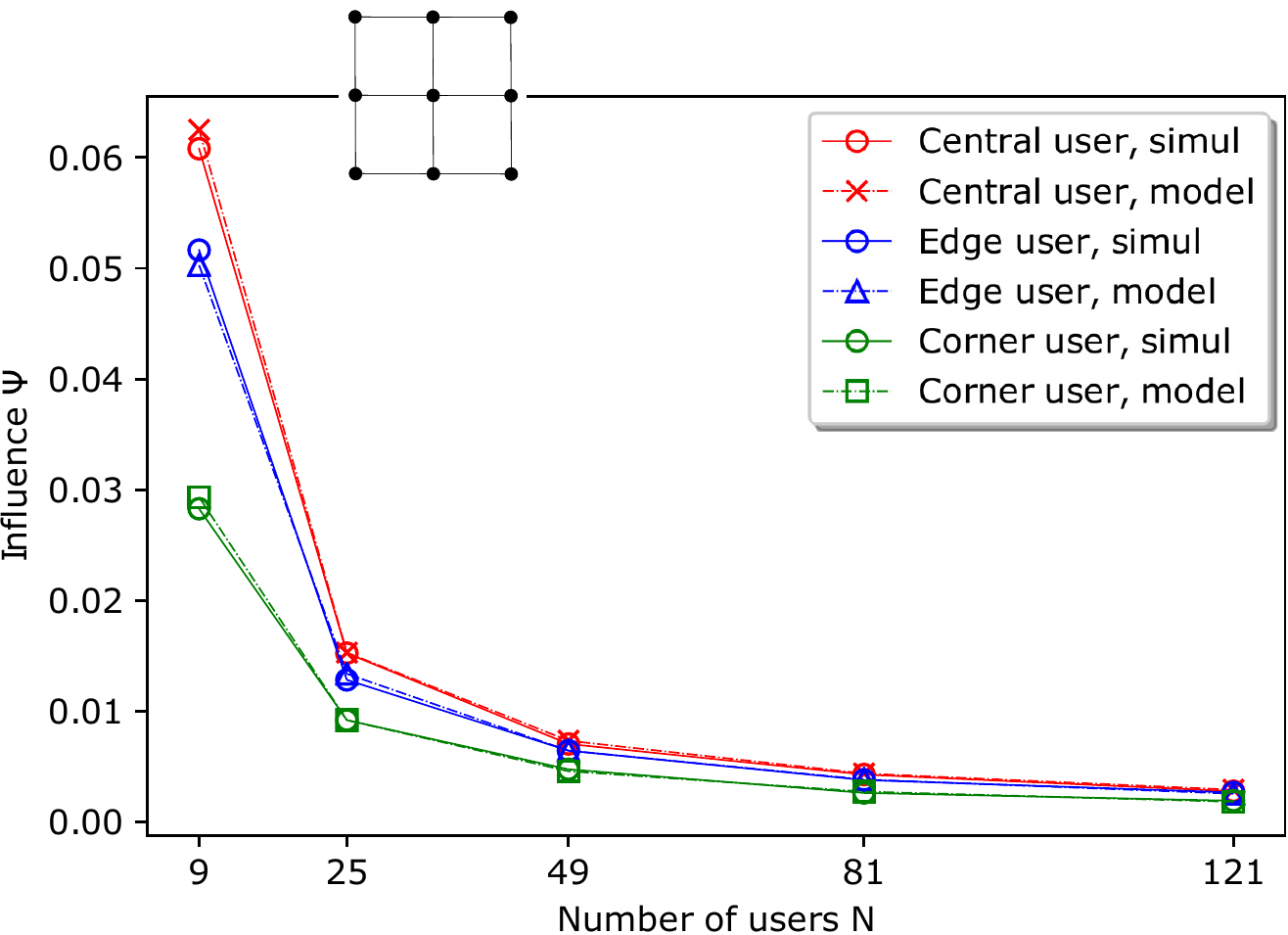}\label{validgrid}} 
	\subfigure{\includegraphics[scale=0.39]{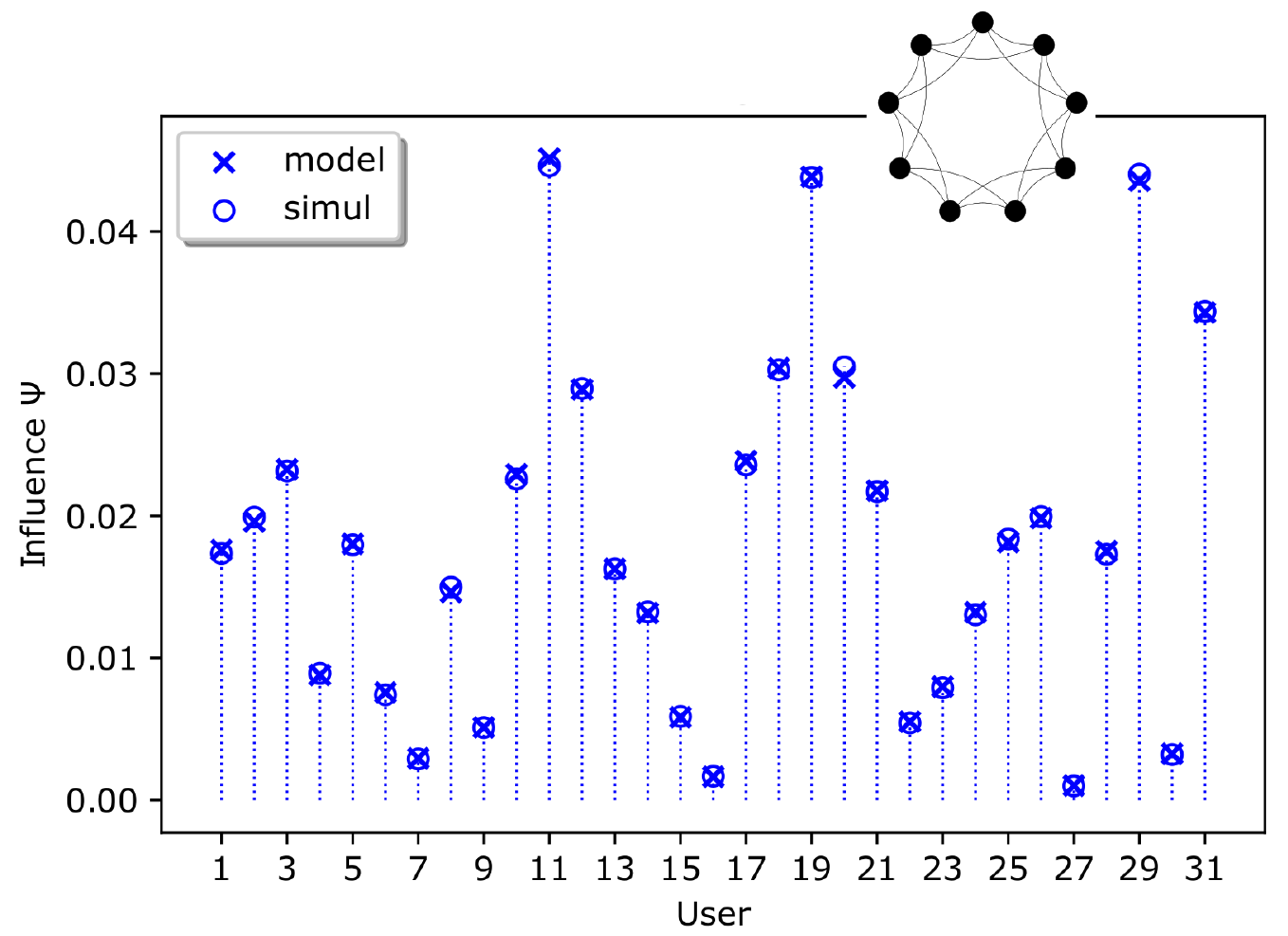}\label{validring}}
\caption{Validation of the model: (a) left: Complete graph, (b) middle: Grid graph, (c) right: Ring graph.}
\label{valid fig}
\end{figure*}


\begin{theo}
\label{t5}
For the two cases of Lemma \ref{Lemma2} and any initialization vector $\mathbf{p}_i(0)$, the discrete-time linear system (\ref{LSa}) converges towards the fixed-point solution (\ref{Th1a}) when $t\rightarrow \infty$.
\begin{equation}
\label{LSa}
\mathbf{p}_i(t) = \mathbf{A}\cdot \mathbf{p}_i(t-1) + \mathbf{b}_i\\
\end{equation}
\end{theo}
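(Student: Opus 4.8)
The plan is to unroll the linear recursion and reduce the whole statement to the two facts already in hand: that $\rho(\mathbf{A})<1$ under either (cs1) or (cs2), by Lemma \ref{Lemma2}, and that this spectral bound controls the Neumann series of $\mathbf{A}$, by Lemma \ref{Lemma1}. Since $\rho(\mathbf{A})<1$, Theorem \ref{Main} already tells us that $\mathbf{p}_i=(\mathbf{I}_N-\mathbf{A})^{-1}\mathbf{b}_i$ is the unique vector satisfying the fixed-point relation $\mathbf{p}_i=\mathbf{A}\mathbf{p}_i+\mathbf{b}_i$, so it suffices to show that the iterates $\mathbf{p}_i(t)$ defined by (\ref{LSa}) converge to this particular vector.

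First I would introduce the error vector $\mathbf{e}(t):=\mathbf{p}_i(t)-\mathbf{p}_i$. Subtracting the identity $\mathbf{p}_i=\mathbf{A}\mathbf{p}_i+\mathbf{b}_i$ from the recursion (\ref{LSa}) cancels the constant term $\mathbf{b}_i$ and yields $\mathbf{e}(t)=\mathbf{A}\,\mathbf{e}(t-1)$, hence by induction $\mathbf{e}(t)=\mathbf{A}^t\,\mathbf{e}(0)$ for every $t\ge 0$, for an arbitrary initialization $\mathbf{p}_i(0)$. (Equivalently, one may unroll directly to $\mathbf{p}_i(t)=\mathbf{A}^t\mathbf{p}_i(0)+\sum_{k=0}^{t-1}\mathbf{A}^k\mathbf{b}_i$ and pass to the limit term by term; I would keep the error-vector form as it is shorter.) The remaining step is to argue $\mathbf{A}^t\to\mathbf{0}$: because $\rho(\mathbf{A})<1$, the series $\sum_{n=0}^{\infty}\mathbf{A}^n$ converges by Lemma \ref{Lemma1}, and a convergent series must have vanishing general term, so $\|\mathbf{A}^t\|\to 0$ in any submultiplicative matrix norm; alternatively Gelfand's formula $\lim_t\|\mathbf{A}^t\|^{1/t}=\rho(\mathbf{A})<1$ gives the sharper geometric bound $\|\mathbf{A}^t\|\le C\sigma^t$ for any fixed $\sigma\in(\rho(\mathbf{A}),1)$. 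Either way $\|\mathbf{e}(t)\|\le\|\mathbf{A}^t\|\,\|\mathbf{e}(0)\|\to 0$, which is exactly the claimed convergence towards the fixed point (\ref{Th1a}), and one may note as a byproduct that the convergence rate is geometric, governed by $\rho(\mathbf{A})$.

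I do not foresee a genuine obstacle here: the entire argument hinges on $\rho(\mathbf{A})<1$, and that is precisely what Lemma \ref{Lemma2} secures in both cases (cs1) and (cs2). The only point worth stating explicitly is that the limit of $\mathbf{p}_i(t)$ is forced to be the \emph{same} vector as in Theorem \ref{Main}: any limit of a convergent sequence satisfying (\ref{LSa}) must solve $\mathbf{p}_i=\mathbf{A}\mathbf{p}_i+\mathbf{b}_i$, and since $(\mathbf{I}_N-\mathbf{A})$ is invertible this equation has the unique solution $(\mathbf{I}_N-\mathbf{A})^{-1}\mathbf{b}_i$, so the convergence is indeed to (\ref{Th1a}).
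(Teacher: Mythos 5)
Your proof is correct and rests on the same essential ingredient as the paper's: Lemma \ref{Lemma2} gives $\rho(\mathbf{A})<1$, hence $\mathbf{A}^t\to\mathbf{0}$ and the iteration converges to the unique fixed point $(\mathbf{I}_N-\mathbf{A})^{-1}\mathbf{b}_i$. The paper unrolls the recursion to $\mathbf{p}_i(t)=\mathbf{A}^t\mathbf{p}_i(0)+\bigl(\sum_{n=0}^{t-1}\mathbf{A}^n\bigr)\mathbf{b}_i$ and passes to the limit in both terms (using Lemma \ref{Lemma1} for the series), whereas your error-vector formulation $\mathbf{e}(t)=\mathbf{A}^t\mathbf{e}(0)$ is a marginally leaner reformulation of the same argument that only needs $\mathbf{A}^t\to\mathbf{0}$ --- you even note the unrolled form as an equivalent alternative, so the two proofs are essentially identical.
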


\begin{proof} We first write $\mathbf{p}_i(t)$ as a function of $\mathbf{p}_i(0)$ and $t$,
\begin{equation}
\label{Pit1}
\mathbf{p}_i(t) = \mathbf{A}^{t}\mathbf{p}_i(0) + \left(\sum_{n=0}^{t-1}\mathbf{A}^n\right)\mathbf{b}_i.\nonumber
\end{equation}
We need to find the limiting value $\mathbf{p}_i:=\lim_{t\rightarrow\infty}\mathbf{p}_i(t)$. For the two cases in Lemma \ref{Lemma2} we have $\rho(\mathbf{A})<1$, so that from \cite[pp.137--138, or Theorem 5.6.12]{HornJohn} it holds $\mathbf{A}^{\infty}:=\lim_{t\rightarrow\infty}\mathbf{A}^{t} = \mathbf{0}$. Additionally, from Lemma \ref{Lemma1} the limit of the matrix series for $t\rightarrow\infty$ converges to $\left(\mathbf{I}_N-\mathbf{A}\right)^{-1}$. Hence, the iteration converges to the solution (\ref{Th1a}), and is independent of the initialisation $\mathbf{p}_i(0)$.
\end{proof}

Note that once the Newsfeed-vector $\mathbf{p}_i:=\lim_{t\rightarrow\infty}\mathbf{p}_i(t)$ has been obtained, the Wall-vector $\mathbf{q}_i$ can be calculated from relation~(\ref{LSb1}). The performance value $\Psi_i$ is then directly derived from (\ref{aim1}).

\begin{figure*}[t]
\centering
	\subfigure{\includegraphics[scale=0.40]{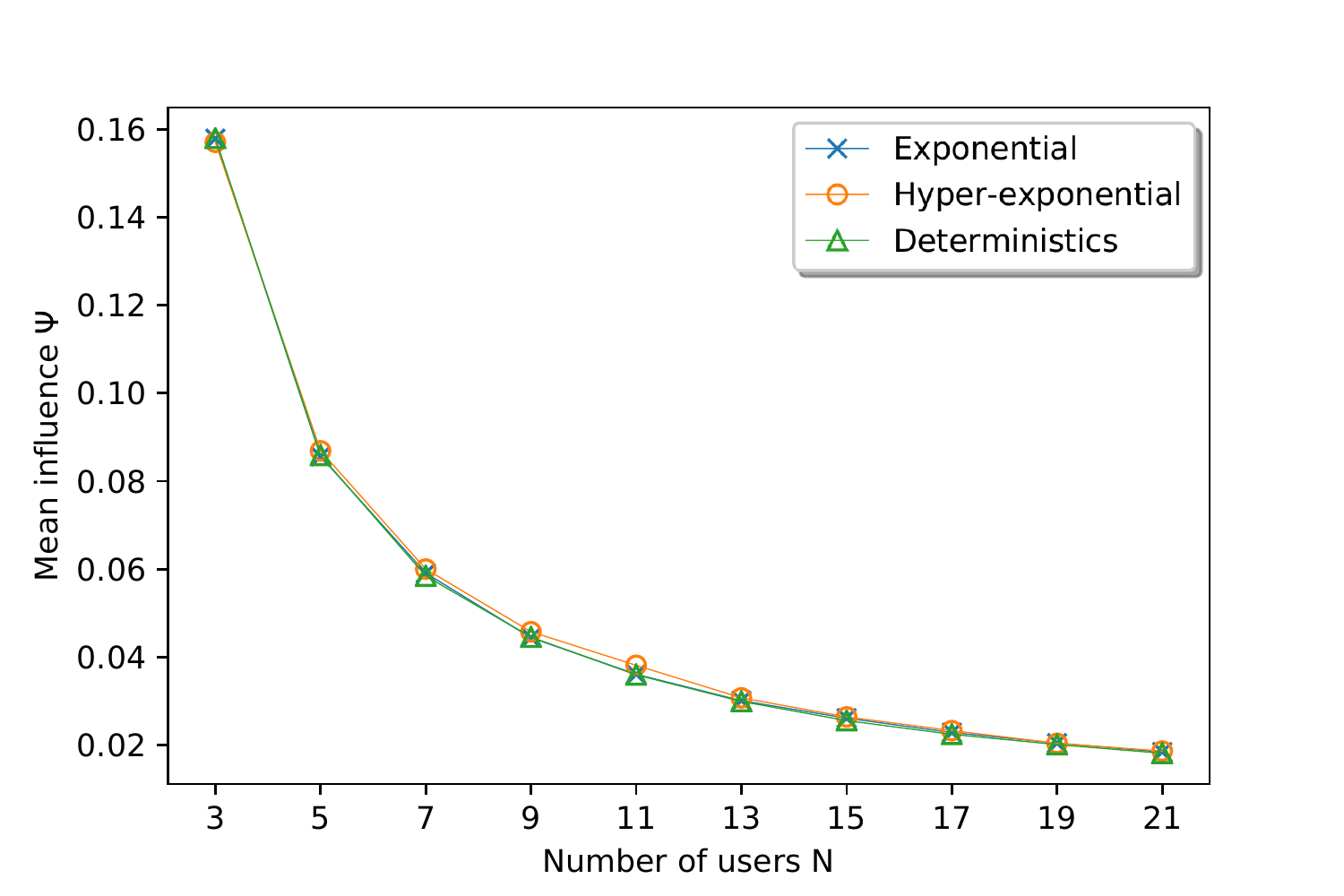}\label{robustproc}}
	\subfigure{\includegraphics[scale=0.40]{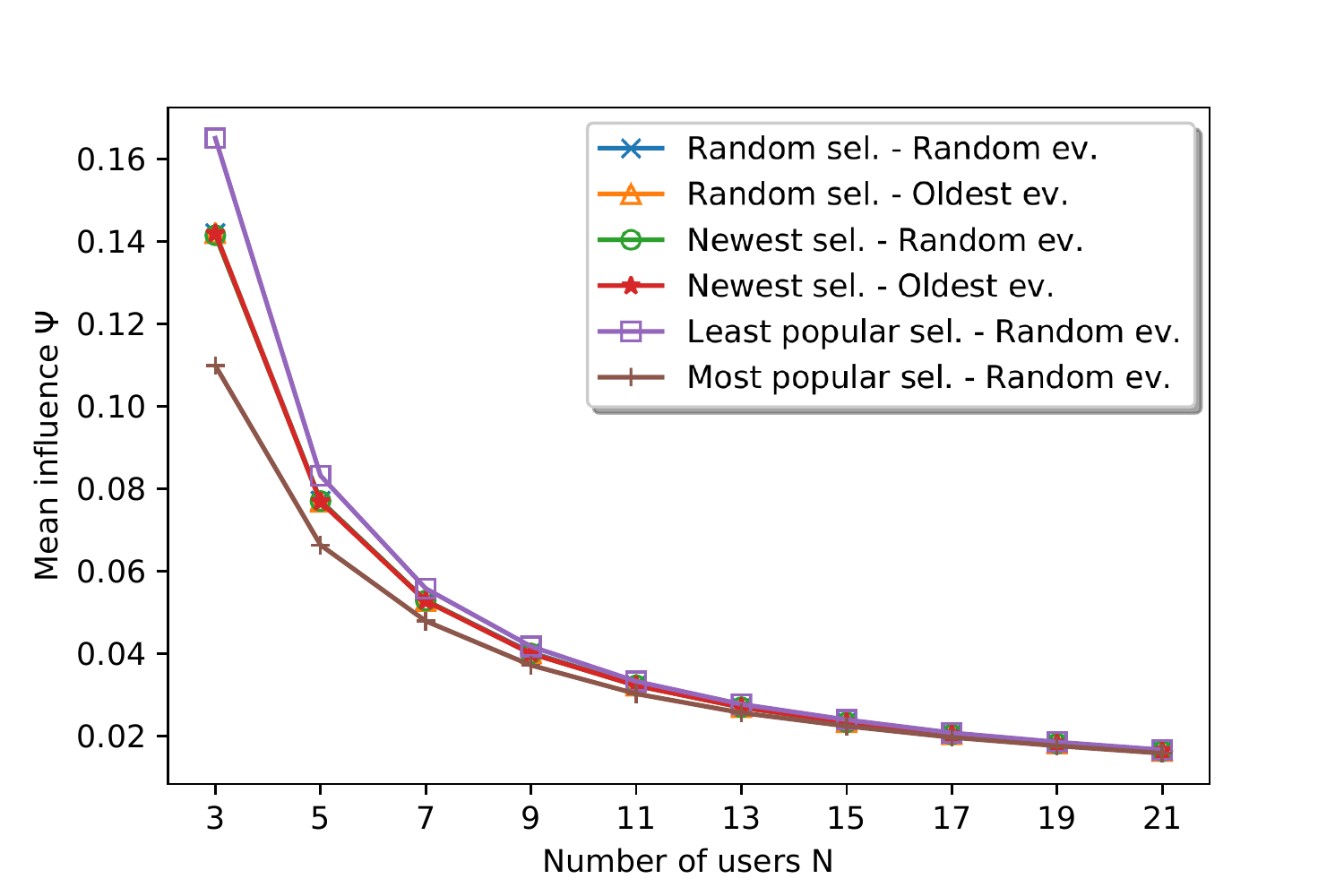}\label{robustselev}} 
\caption{Sensitivity with respect to modeling assumptions: (a) left: Inter-arrival distributions, (b) right: Selection/Eviction policies.}
\label{robust fig}
\end{figure*}

\section{Numerical Evaluation}
\label{numanal}


We have shown that the linear system in Theorem \ref{th2} can be solved either by matrix inversion, or by using a fixed-point iteration algorithm. We have programmed both methods in Python for any social-graph input and made the code freely available in the INFOCOM ieee final version. This code will be used for the numerical evaluation. Additionally, we have developed our own discrete event simulator to validate the mathematical analysis through simulation, and furthermore to evaluate the robustness of the modeling assumptions presented in Section \ref{assumptions}. Unlike our model, the simulator precisely implements the behavior of the generic OSP as described in Section~\ref{systemd}: i) The global state description consists of dynamic lists (of length $K$ for Walls and $M$ for Newsfeeds); ii) A variety of selection and eviction policies are implemented (random, in a first phase, and newest, oldest, popular later to evaluate robustness); iii) Self- and re-posts can be generated according to Poisson or other processes. As such, the simulator \textit{does not} decouple the state space, \textit{does not} estimate average probabilities, and \textit{does not} rely on Markovian assumptions.
For each simulation we set $M=20$ and $K=10$ and ran long enough simulations to reach the steady-state with small confidence intervals. 
More specifically, in all experiments, we let the simulator run for a total of $\numprint{300000}$ events (self- and re-posts). 
	
\subsection{Validation} \label{valid}

	
We compare the values of the influence metric (\ref{aim1}) resulting from the numerical model with those obtained by simulation. We use three different configurations for the user graph: complete graph, grid and ring. The results are given in Fig.~\ref{valid fig}.

\paragraph{Complete graph} In this case, each user follows all other users. All users have the same activity tuple $(\lambda,\mu)$.
As the network is totally symmetric, we plot the influence of any user over the network for three different values of $\rho:=\frac{\lambda}{\mu}$ ($0.5$, $1$ and $2$), as a function of the network size $N$. As shown on Fig.~\ref{validcomp}, there is a very good fit between model and simulation, with a maximal relative error of about $0.5\%$.
As a qualitative result, we observe that the influence of a given user decreases as the size of the network increases.  This is reasonable since the more the users in the network, the larger the competition between users to influence the Wall of eachother, thus the smaller the influence per user. Furthermore, we observe that the smaller the $\rho$, the higher the influence. This result shows that in a fully symmetric network, when everyone decreases his self-post activity, there is more space left on his Wall for being influenced by others through re-posting.
	
\paragraph{Grid graph} Depending on his position in the grid, a user may have $4$, $3$ or $2$ leaders. All users have the same activity tuple $(\lambda,\mu)$, here set to $(5,3)$. On Fig.~\ref{validgrid}, we plot the influence metric for three different types of users : the central user (with $4$ leaders), a user at the middle of an edge (with $3$ leaders) and a user at a corner (with $2$ leaders), as a function of the network size, $N=9, 25, 49, 81, 121$. We again obtain a very good match between the results of the model and simulation, with a maximal absolute error of about $10^{-3}$.
We observe that the corner user has less influence than the edge user, who in turn has less influence than the central user. This is an obvious qualitative result, considering the different numbers of followers each user has, but here we quantify the impact of the position on the chosen influence metric.
	
\paragraph{Ring graph} Users are arranged on a circle and each user $i$ has $R_i$ leaders on his right ($i+1$, ..., $i+R_i$) and $R_i$ leaders on his left ($i-1$, ..., $i-R_i$), $R_i$ being denoted as the \textit{radius} of user $i$.
For the comparison we use $N=31$ users. Each one has been given a random uniform radius in $\left\{1,...,15\right\}$ and a random uniform activity tuple $(\lambda_i,\mu_i)$ in $\left[0.1,10\right] \times \left[0.1,10\right]$. We plot the influence for each user in Fig.\ref{validring}.
Once again, we observe a very good fit between model and simulation, even for the user who has the least accurate estimation (user $``20"$ with a relative error of $2.5\%$).

\subsection{Robustness} \label{robust}
	
We further evaluate the robustness of the model with respect to the modeling assumptions: Poisson arrival processes, random selection and random eviction policies. For this purpose, we modified our simulator to take into account alternative inter-arrival distributions, as well as alternative selection and eviction policies.
In each simulation of this subsection, we chose a complete graph with a varying number $N$ of users, and we set $(\lambda,\mu)=(10,5)$.
	
\paragraph{Inter-arrival times} First we evaluate the robustness of the model with respect to the Poisson arrival assumption.
In Fig.~\ref{robustproc}, we plot the evolution of the average influence $\sum_i \Psi_i/N$ with three different inter-arrival distributions for both self-post and re-post processes of all users: exponential (corresponding to the original Poisson assumption), hyper-exponential (resulting in a process having more variability than the Poisson process) and deterministic (process with no variability). As can be seen on the figure, the three curves almost coincide. We have confirmed this observation with many different tests, highlighting the fact that the model is almost insensitive to the Poisson assumption.

\paragraph{Policies}
We programmed in our simulator alternative policies based on age and popularity. In the ``newest selection'' policy, each user always chooses the most recent post on his Newsfeed list to re-post on his Wall (instead of a random one). Hence, this policy uses extra information on the view order of posts. With the ``least popular (resp. most popular) selection'', the post to be re-posted is selected among the ones with the maximal (resp. minimal) number of re-posts. This policy also uses extra information on re-post history. ``Oldest eviction'' means that when a post has to be evicted from a Wall or a Newsfeed, it is the oldest in the list that is chosen (instead of a random one). We observe on Fig.~\ref{robustselev} that replacing random selection by newest selection and/or replacing random eviction by oldest eviction, has almost no influence on performance. Now, in the case of a selection policy based on popularity, the difference with a random policy becomes higher. This is not surprising, as selection policies based on re-post history, drastically alter the studied system. However, the effect is mainly significant for small networks. We can thus generally conclude that our model is also very robust with regard to the choice of selection and eviction policies.


\begin{figure*}[t!]
\centering
	\subfigure{\includegraphics[scale=0.4]{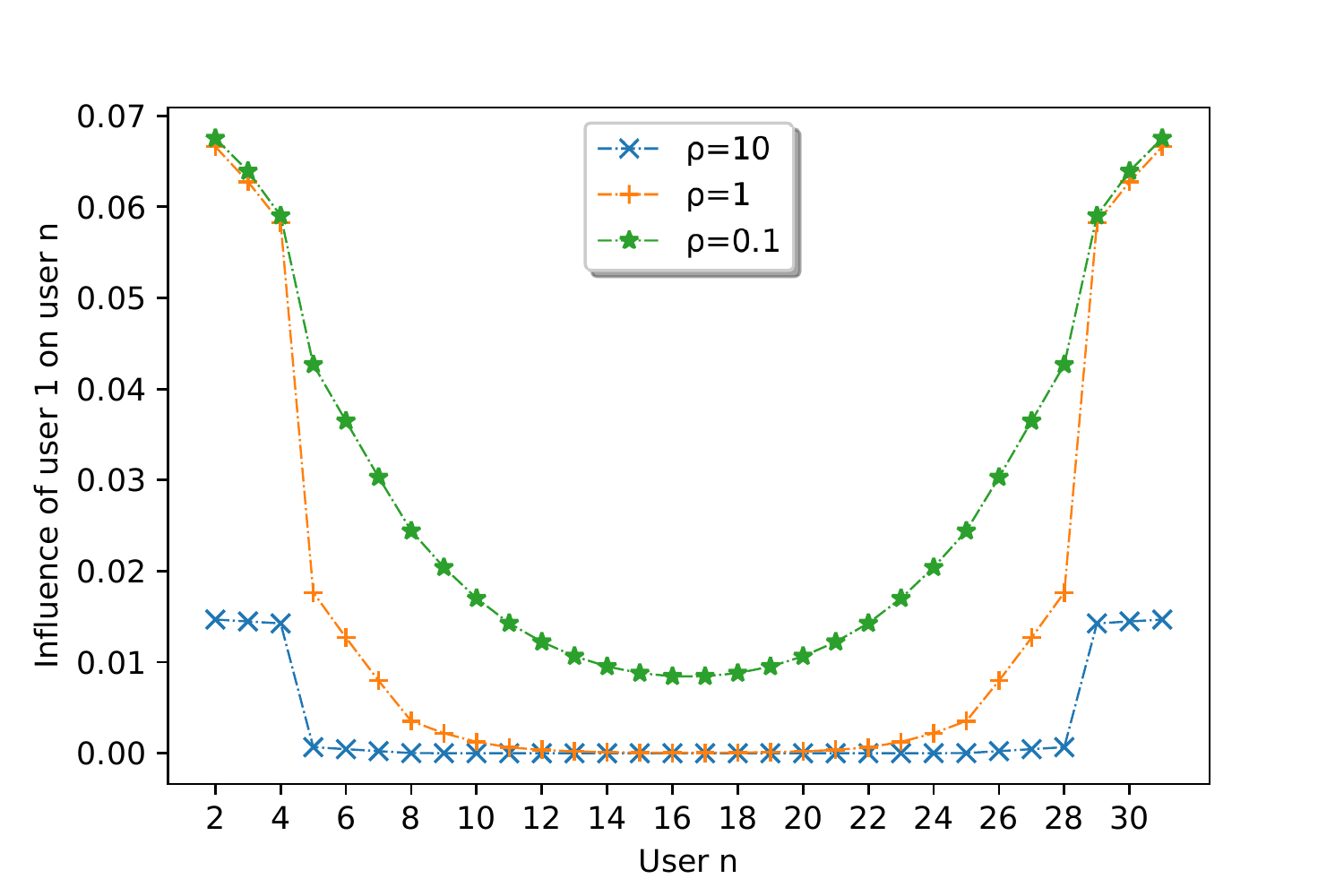}\label{ring}} 
	\subfigure{\includegraphics[scale=0.4]{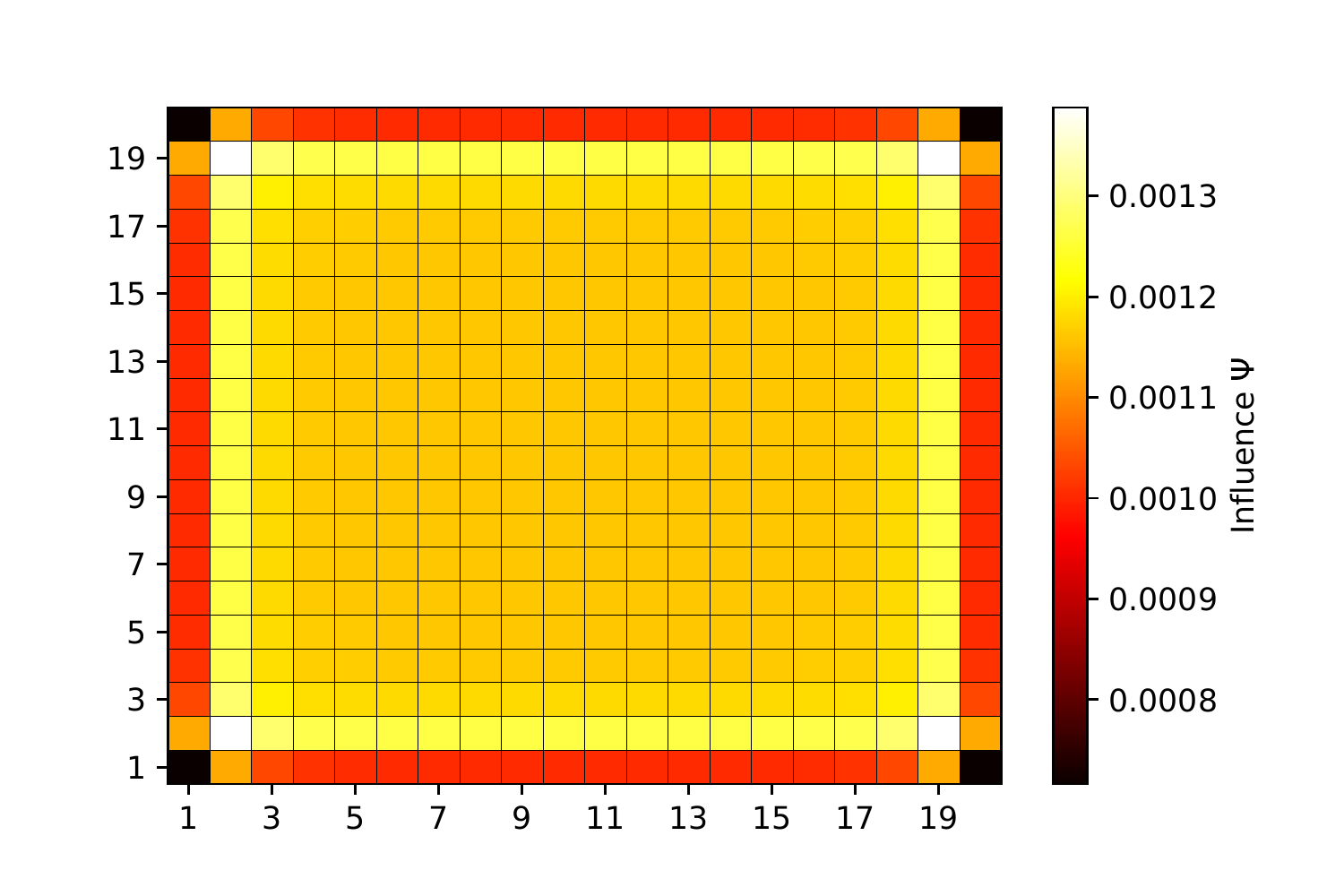}\label{grid1}}
	\subfigure{\includegraphics[scale=0.37]{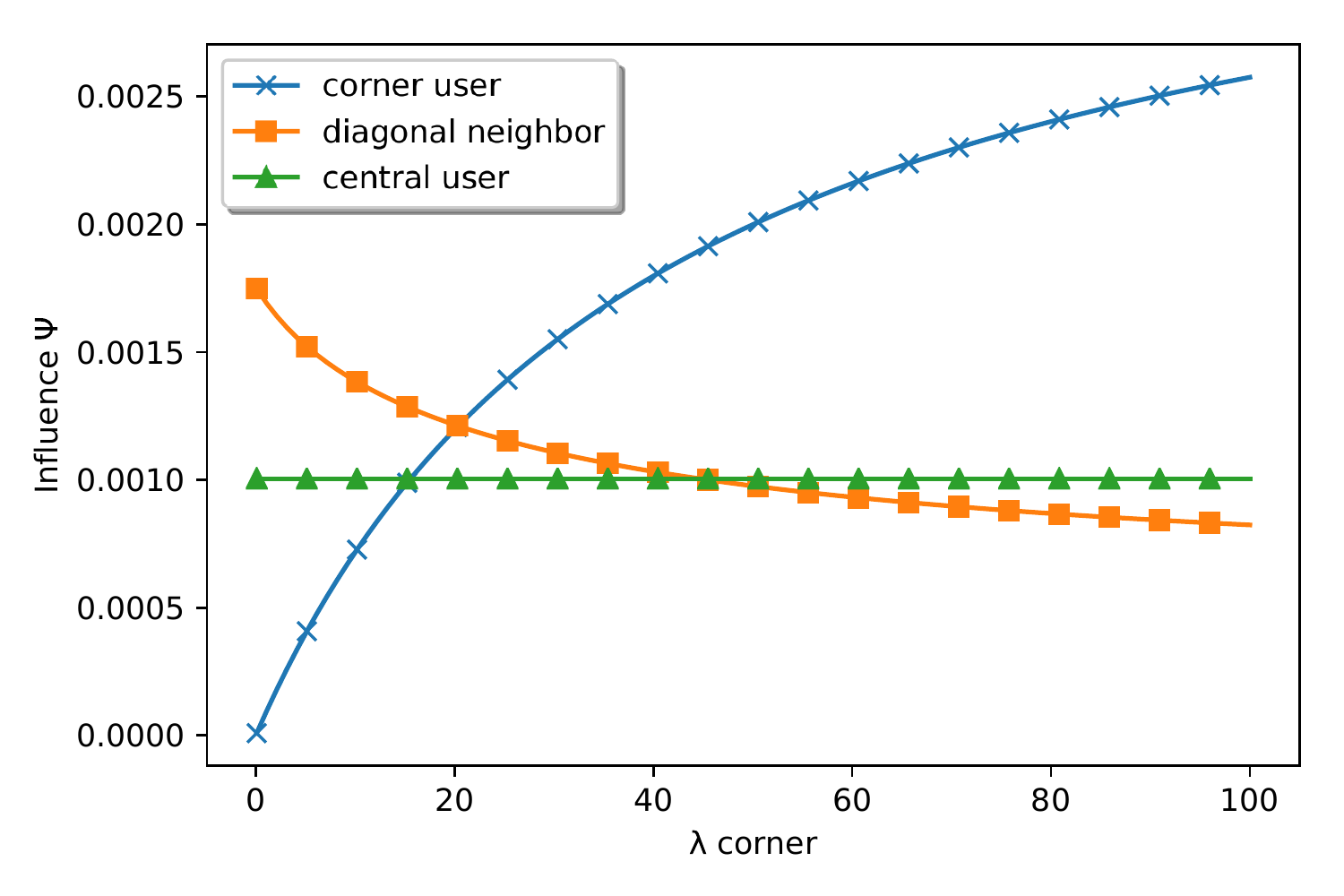}\label{grid2}}
\caption{Model exploitation: (a) left: Ring graph, (b) middle: Grid - fixed symmetric activity, (c) right: Grid - influence evolution.}
\label{perf fig}
\end{figure*}	

\subsection{Model exploitation} \label{exploit}


Having demonstrated the accuracy and the robustness of our model, we now investigate how the influence of a given user is related to his position in the graph and to the relative values of his own activity compared to the activity of other users.
	
\paragraph{Direct vs. indirect influence}

We first consider a ring with $N=31$ users, each one having the same radius $R=3$, and we plot the influence of user $``1"$ on the other users. Since $R=3$, user $``1"$ has six followers: users $\left\{``2", ``3", ``4", ``29", ``30", ``31"\right\}$. In Fig.~\ref{ring} we represent three curves corresponding to three different values of the ratio $\rho=\frac{\lambda}{\mu}$, namely $\rho=0.1$, $1$, and $10$, assuming that each user in the network has the same self-post rate $\lambda$ and the same re-post rate $\mu$. First, we observe that all the curves are symmetrical, which comes from the symmetry of the user graph. More importantly, we see that user $``1"$ has a greater influence on his direct followers, than on the other users. And obviously, the greater the distance from user $``1"$, the less influence of user $``1"$ on the considered user. Here user $``16"$ is the one that is less influenced by user $``1"$. Interestingly, the difference between the ``direct influence'', i.e., the influence of a user on one of his follower, and the ``indirect influence'', i.e., the influence of a user on a node that is not one of his follower, is closely related to $\rho$. The smaller the $\rho$, the larger the influence diffusion in the network.

\paragraph{Influence and graph position}
Here we consider a grid graph with 400 users, each one having the same activity $\left(\lambda, \mu\right) = \left(10, 10\right)$. Each square represents a user and is colored according to his influence $\Psi_i$ over the network.
As expected, the peripheral users, i.e., the users located on the outer edges, and in a more pronounced manner, the corner users are the ones with the smallest influence. This is due to the fact that these users have $3$ or $2$ direct followers, whereas all others have $4$. But contrary to intuition, the central users are not the most influential. In fact, the users with the highest influence are located in the first inner ring, i.e., at one hop from the peripheral ones. And more precisely, the most influential users are the four diagonal neighbors of the corner users. We have verified this property on different graphs. As an example, on a tree, leaves are the less influential users, whereas the parents of leaves are the most influential users. As a conclusion, in a social network where all users have the same activity, being a leader of users with few other leaders increases one's influence. Obviously this is partly due to our definition of influence, but alternative metrics have shown to follow the same trend.
	
\paragraph{Influence and activity}
We now want to see if a user with a position that gives him a low influence in the network can counterbalance his bad placement by increasing his posting activity. 
To this aim we consider again the grid of the previous subsection. We set $\left(\lambda, \mu\right) = \left(10, 10\right)$ for all users except for the south-west corner user who is given the same re-post rate, $\mu_{corner}=10$, but that can adjust his self-post rate $\lambda_{corner}$. In Fig.~\ref{grid2}, we let only $\lambda_{corner}$ vary from 0.1 to 100, keeping $(\lambda,\mu)$ and $\mu_{corner}$ fixed, and plot the activity of the considered corner user, as well as the activity of the central user and of the diagonal neighbor of the corner user (the most influential in a network with symmetric activities).

First of all,  we observe that the corner user becomes more and more influential as his posting rate increases,  and he eventually becomes the most influential user in the network. The central user is too far away from the corner user to be affected by the evolution of his posting rate, so his influence remains constant. However, observe that the raise of $\lambda_{corner}$ causes a drop in the influence of his diagonal direct neighbor.
As a conclusion, the answer is positive: one can counterbalance his bad position by increasing his self-posting activity.

\section{Conclusions}
\label{conclu}
In this work we have introduced an original mathematical model that analyzes the diffusion of posts in a generic social platform and quantifies the influence of a given user over any other within the entire network. By resolving it we have derived closed-form expressions for metrics of influence, which allow to rank users. These results constitute a novel powerful toolbox that can be further exploited to understand and design social platforms. They should be further compared and combined with network data analytics to highlight their importance. Our model can be used to derive policies for optimal user activity. But, most importantly we hope that it can lead to platform design rules that offer fair and unbiased access and post circulation for all.



\end{document}